\newtheorem{theorem}{Theorem}[section]
\newtheorem{corollary}[theorem]{Corollary}
\newtheorem{proposition}[theorem]{Proposition}
\newenvironment{proof}[1][Proof]{\begin{trivlist}
\item[\hskip \labelsep {\bfseries #1}]}{\end{trivlist}}
\newcommand{\qed}{\nobreak \ifvmode \relax \else
	\ifdim\lastskip<1.5em \hskip- \lastskip
	\hskip1.5em plus0em minus0.5em \fi \nobreak
	\vrule height0.75em width0.5em depth0.25em\fi}
\begin{document}

\title{Existence of gradient CKV and gradient conformally stationary LRS spacetimes}

\author{Seoktae \surname{Koh}}
\email{kundol.koh@jejunu.ac.kr}
\affiliation{Department of Science Education, Jeju National University, Jeju, 63243, South Korea}
\affiliation{Institute for Gravitation and the Cosmos, Pennsylvania State University, University Park, PA 16802, USA}
\author{Abbas M \surname{Sherif}}
\email{abbasmsherif25@gmail.com}
\affiliation{Department of Science Education, Jeju National University, Jeju, 63243, South Korea}

\author{Gansukh \surname{Tumurtushaa}}
\email{gansukh@jejunu.ac.kr}
\affiliation{Department of Science Education, Jeju National University, Jeju, 63243, South Korea}

\begin{abstract}
In this work, we study the existence of gradient (proper) CKVs in locally rotationally symmetric spacetimes (LRS), those CKVs in the space spanned by the tangent to observers' congruence and the preferred spatial direction, allowing us to provide a (partial) characterization of gradient conformally static (GCSt) LRS solutions. Irrrotational solutions with non-zero spatial twist admit an irrotational timelike gradient conformal Killing vector field and hence are GCSt. In the case that both the vorticity and twist vanish, that is, restricting to the LRS II subclass, we obtain the necessary and sufficient condition for the spacetime to admit a gradient CKV. This is given by a single wave-like PDE, whose solutions are in bijection to the gradient CKVs on the spacetime. We then introduce a characterization of these spacetimes as GCSt using the character of the divergence of the CKV, provided that the metric functions of the spacetimes obey certain inequalities.
\end{abstract}

\maketitle

\section{Introduction}


The extensive role of symmetries in general relativity is evidenced by their diverse applications, from generating new exact solutions to the Einstein field equations, their use in geometry (conformal geometry), to roles in understanding the thermodynamics of both static and dynamical black holes. The ubiquity of symmetries in general relativity means that new approaches to studying them are being introduced in various contexts. 

Special attention has been devoted to the spherically symmetric solutions to the Einstein field equations. The Bianchi models, for example, have been extensively dealt with in the literature \cite{tsa1,tsa2,kha1,tsa3}, where their conformal algebras have been completely classified. Conformal symmetries of imperfect, perfect, and anisotropic fluids have been extensively analyzed as well \cite{tsa4,tsa5,aac1,aac2,aac3,mahar1,mahar2,mahar3}. 

The class of locally rotationally symmetric (LRS) spacetimes, a particular class of important solutions with diverse applications, which admits specialized decomposition along the timelike and spatial congruences, generalizes spherically symmetric solutions. The existence of (proper) CKVs in these spacetimes, when the vorticities of the timelike and spatial congruences are both non zero, was relatively recently considered by Singh \textit{et al.} \cite{singh1}, where it was shown that these spacetimes always admit a proper conformal killing vector in the subspace spanned by the tangents to the timelike and spacelike congruences. (By proper here it is meant that the divergence of the CKV is non-constant.) The proof largely relied on a defining relationship between the vorticities of the congruences. Van den Bergh \cite{bergh1}, has also considered the case of rotating and twisting LRS spacetimes, where explicit solutions were provided using the existence of such symmetries. (Dyer \textit{et al.} \cite{dy2} have studied such symmetries in some subclass of McVittie metrics \cite{mv1,mv2,mv3}, where conditions on a particular metric function was obtained that ensures the existence of CKVs in these metrics (also see the recent work by Herrera \textit{et al.} \cite{her1}).) In the case of the subclass with both vorticities vanishing (the LRS II class), such relationship is ill-defined and hence, their analysis was not adaptable to this subclass of LRS spacetimes. 

A spacetime which admits a timelike proper CKV is said to be \textit{conformally stationary} (CS), i.e. it is conformal to some stationary spacetime, and if the conformal observers moving along the conformal orbits experience no vorticity, the spacetime is said to be \textit{conformally static} (CSt). If the CKV is a gradient of some smooth function on the spacetime, referred to as the potential function of the vector field, the spacetime is referred to as a gradient conformally stationary (GCS) spacetime (GCSt spacetime, when there is no vorticity associated to the CKV) \cite{rub1}. If the gradient condition is global, then, these spacetimes are stably causal, i.e. the spacetime admits a global time function, and thereby ensures the non-existence of closed timelike curves (see \cite{rub1} and associated references for more details). GCS spacetimes generalizes the wide class of generalized Robertson-Walker (GRW) spacetimes (see the review \cite{man1} and associated references) that includes many spacetime models which are of wide ranging interests to the cosmology and astrophysics community. For GCS spacetimes, in the region that the CKV is timelike, the spacelike slices of the foliation determined by the distribution of the vector field has constant mean curvature (CMC), and is also of immense geometric interest, not least that the CMC condition is useful in the study of the Einstein field equations as an evolution problem (again, see \cite{rub1} and references therein).

In this work, we are interested in characterizing GCS LRS spacetimes. The existence of gradient CKVs have been considered for perfect and anisotropic fluid spacetimes by Daftardar and Dadhich \cite{nd1}, where the local forms of the spacetime metrics admitting the vector field were obtained. The role of gradient CKVs in generating conformal Killing tensors has also been studied extensively examined. For example, using the Koutras algorithm for generating Killing an conformal Killing tensors from existing Killing and conformal Killing vectors on the spacetime \cite{koutras1}, Amery and Maharaj, \cite{Amery1}, provided an explicit construction for the form of Killing tensors in an Einstein spacetime, and additionally demonstrating that the form of the Killing tensors is invariant in the conformally related spacetime. Rani \textit{et al.}, \cite{rani1}, established a generalization of Koutras results by eliminating the orthogonality constraint imposed on the pair of (C)KVs from which the tensor is constructed. Detailed analyses were then carried out on the construction of (conformal) Killing tensors (which the authors referred to as  \textit{(conformal) Killing tensors of gradient type}) from gradient (C)KVs. Our aim here, however, is to provide an analysis of, and establish the necessary and sufficient conditions for the existence of a gradient CKV in the class of LRS spacetimes, and under which conditions are the CKV timelike. 

The approach to be employed here is the 1+1+2 formalism \cite{cc1,cc2}, to be introduced shortly, which specifies the spacetimes in terms of well defined covariant variables. Recently, some of the physical and geometric implications arising from employing this covariant approach have been uncovered. For example, in \cite{chevarra1}, Chevarra \textit{et al.} conducted an extensive analysis of the effects of the 1+1+2 decomposition on spacetimes with conformal symmetry. Various constraints on the spacetime variables were obtained. Fixing an equation of state (assuming a functional dependence of the pressure and energy density) and assuming a perfect fluid matter type, the authors showed that the divergence of CKV obeys a damped wave equation. In \cite{hakata1}, Hakata \textit{et al.} showed that only for a restricted class of equations of state, obeying a non-linear fourth order differential equation, can a shear-free perfect fluid spacetime be homogeneous, thereby justifying the physicality of these solution types. Crucially, the acceleration of the 4-velocity and its expansion are sufficient to fully classify these spacetimes. Chevarra \textit{et al.} recently considered conformal symmetries in the generalized Vaidy metric, demonstrating the dependence of the spacetime variables specified by the temporal and spatial congruences, on the components of the CKV and its divergence \cite{chevarra2}. 

We organize this paper in the following manner. In Section \ref{lrs1}, we introduce the class of LRS spacetimes and briefly discuss the 1+1+2 spacetime decomposition formalism, following the standard literature. In Section \ref{lrs2}, we investigate the existence of gradient CKVs in LRS spacetimes. We decompose our analysis into the LRS II class and non-LRS II class. The GCS non-LRS II case will be completely characterized. In the LRS II case, the necessary and sufficient condition for the spacetime to admit a gradient CKV will be given, and the conditions on the kinematic variables ensuring the timelike character of the CKV is provided, using the character of the divergence of the CKV. We conclude in Section \ref{lrs4} with a summary of our results.


\section{LRS spacetimes and a semi-tetrad decomposition of spacetime}\label{lrs1}

We begin by introducing the \(1+1+2\) semitetrad covariant formalism and locally rotationally symmetric spacetimes in context of the formulation.

\subsection{The 1+1+2 spacetime decomposition}

Like the powerful 1+3 formalism \cite{el3,el4,el5} (also see references therein) which allows for the threading of the spacetime along the fluid flow lines with unit tangent vector \(u^{\mu}\), the 1+1+2 formalism is a special case of the 1+3 formalism, where the 3-space orthogonal to \(u^{\mu}\) is further decomposed along a spacelike unit vector field, which we denote by \(e^{\mu}\), that is orthogonal to \(u^{\mu}\) \cite{cc1,cc2}. Such decomposition allows one to project tensor and vector quantities, as well as the covariant derivative, along the unit directions and the resulting 2 space: the \(\dot{\ }\) notation denotes derivative along the \(u^{\mu}\) direction, \(\hat{\ }\) denotes derivative along the \(e^{\mu}\) direction, and \(\delta_{\mu}=N^{\ \nu}_{\mu}\nabla_{\nu}\) (\(\nabla_{\mu}\) is the 4-dimensional spacetime covariant derivative) the derivative on the 2-space which results from decomposing the 3-space, with \(N^{\mu\nu}\) projecting vectors and tensors orthogonal to \(u^{\mu}\) and \(e^{\mu}\), to the 2-space. (In the literature this 2-space is usually referred to as the ``sheet". This is due to the non-symmetricity, in general, of \(\delta_{\mu}\delta_{\nu}\psi\) for an arbitrary scalar, i.e. \(\delta_{[\mu}\delta_{\nu]}\psi\neq0\), and hence sometimes one has the collection of tangent planes rather than a genuine surface. However, in the case of LRS spacetime, the sheets are always genuine surfaces.) The spacetime metric also decomposes as

\begin{eqnarray*}
g_{\mu\nu}=N_{\mu\nu}-u_{\mu}u_{\nu}+e_{\mu}e_{\nu}.
\end{eqnarray*}

Spacetime vectors can accordingly be decomposed along the two preferred directions and the sheet. For example, a vector \(\psi^{\mu}\) can be written in the form

\begin{eqnarray*}
\psi^{\mu}=\psi_1u^{\mu}+\psi_2e^{\mu}+\bar{\psi}^{\mu},
\end{eqnarray*}
with scalars \(\psi_1\) and \(\psi_2\) being the respective components along \(u^{\mu}\) and \(e^{\mu}\), and \(\bar{\psi}^{\mu}\) denoting the part of \(\psi^{\mu}\) lying in the sheet. 

The Bianchi and Ricci identities split along the preferred directions so that the field equations can be written as a collection of evolution and propagation equations of covariant quantities, along with a set of constraints obtained via projection with \(N^{\mu\nu}\).

The field equations in terms of these covariant quantities, restricted to the LRS case (of interest to this work), are given in Appendix A. In addition, the following relation is useful in the consistency check of the fields equations: Given any scalar \(\psi\) in the spacetime, the dot and hat derivatives commute as

\begin{eqnarray}\label{jun42}
\hat{\dot{\psi}}-\dot{\hat{\psi}}=-A\dot{\psi}+\left(\frac{1}{3}\Theta+\Sigma\right)\hat{\psi}.
\end{eqnarray}
Similar commutation relations for the dot and delta derivatives, the hat and delta derivatives, etc., acting on scalars as well as vectors and tensors, have been obtained (see the references \cite{cc1,cc2}). These are not needed for the purpose of this work and therefore will not be included here.

\subsection{LRS spacetimes}

\textit{Locally rotationally symmetric (LRS)} spacetimes are those admitting a multiply transitive isometry group, with a continuous isotropy group at each point of the spacetime. Locally, these spacetimes admit a preferred spatial direction (see the references \cite{el1,el2} for more details). Due to the symmetry of these spacetimes, one may write the local metric in coordinates \((t,\mathcal{R},y,z)\) as (see \cite{el2} for example)

\begin{align}\label{fork}
ds^2=-A^2dt^2+ B^2dr^2+ C^2dy^2+\left(\left(DC\right)^2+\left(Bh\right)^2-\left(Ag\right)^2\right)dz^2+2\left(A^2gdt-B^2hdr\right)dz,\nonumber
\end{align}
where \(A,B\) and \(C\) are functions of coordinates \(t\) and \(r\), and the functions \(g\) and \(h\) are functions of \(y\). The function \(D\) is a function of \(y\) and \(k\) where \(k\) is a constant that fixes the function \(D\) (\(k=-1\) corresponds to \(\sinh y\), \(k=0\) corresponds to \(y\), \(k=1\) corresponds to \(\sin y\)). In the limiting case that \(g=0=h\), we recover the well
studied LRS II class of spacetimes, which generalizes spherically symmetric solutions to the Einstein field equations. The time and spacelike congruences are specified by the unit directions

\begin{eqnarray*}
u^a=-A^{-1}\partial_t^a;\qquad e^a=B^{-1}\partial_r^a.
\end{eqnarray*}

These spacetimes can be specified entirely by the set of covariant scalars \cite{cc2}

\begin{eqnarray*}
\mathcal{D}:\equiv\lbrace{\rho,p,Q,\Pi,\mathcal{E},\mathcal{H},\mathcal{A},\Theta,\Sigma,\phi,\Omega,\xi\rbrace},
\end{eqnarray*}
where \(\rho\equiv T_{\mu\nu}u^{\mu}u^{\nu}\) is the energy density, \(p\equiv\left(1/3\right)h^{\mu\nu}T_{\mu\nu}\) is the pressure (isotropic), \(Q=-T_{\mu\nu}e^{\mu}u^{\nu}\) is the heat flux, \(\Pi=T_{\mu\nu}e^{\mu}e^{\nu}-p\) is the anisotropic stress, \(\mathcal{E}=E_{\mu\nu}e^{\mu}e^{\nu}\) encodes the electric part of the Weyl tensor \(E_{\mu\nu}\), \(\mathcal{H}=H_{\mu\nu}e^{\mu}e^{\nu}\) (encodes the magnetic part of the Weyl tensor), \(\mathcal{A}=\dot{u}_{\mu}e^{\mu}\) is the acceleration, \(\Theta\equiv D_{\mu}u^{\mu}\) is the expansion, \(\Sigma=e^{\mu}e^{\nu}D_{\langle \nu}u_{{\mu}\rangle}\) is the shear, \(\phi=\delta_{\mu}e^{\mu}\) denotes the expansion of the \(2\)-space (referred to as the sheet expansion), \(\Omega=\omega^{\mu}e_{\mu}\) (\(\omega_{\mu}\) is the vorticity, and \(\xi=(1/2)\varepsilon^{\mu\nu}\delta_{\mu}e_{\nu}\) is the spatial twist (the twist of \(e^{\mu}\)), with

\begin{align}
T_{\mu\nu}=\rho u_{\mu}u_{\nu}+\left(p+\Pi\right)e_{\mu}e_{\nu}+ 2Qe_{(\mu}u_{\nu)} + \left(p-\frac{1}{2}\Pi\right)N_{\mu\nu},
\end{align}
being the stress energy tensor, and the derivative operator \(D_{\mu}\) denoting the covariant derivative on the hypersurface to which \(u^{\mu}\) is orthogonal. 

The covariant derivatives of the unit vector fields \(u^{\mu}\) and \(e^{\mu}\) for LRS spacetimes are given by

\begin{align}
\nabla_{\mu}u_{\nu}&=-\mathcal{A}u_{\mu}e_{\nu}+\left(\frac{1}{3}\Theta+\Sigma\right)e_{\mu}e_{\nu}+\frac{1}{2}\left(\frac{2}{3}\Theta-\Sigma\right)N_{\mu\nu}+\Omega\varepsilon_{\mu\nu},\label{start10}\\
\nabla_{\mu}e_{\nu}&=-\mathcal{A}u_{\mu}u_{\nu}+\left(\frac{1}{3}\Theta+\Sigma\right)e_{\mu}u_{\nu}+\frac{1}{2}\phi N_{\mu\nu}+\xi\varepsilon_{\mu\nu}.\label{start11}
\end{align}

The tensor \(\varepsilon_{\mu\nu}\) is the two-dimensional alternating tensor defined as

\begin{eqnarray*}
\varepsilon_{\mu\nu}=\varepsilon_{\mu\nu\sigma}e^{\sigma}=u^{\delta}\eta_{\delta\mu\nu\sigma}e^{\sigma},
\end{eqnarray*}
with  \(\varepsilon_{\mu\nu\delta}\) and \(\eta_{\delta\mu\nu\sigma}\) respectively denoting the 3-dimensional and 4-dimensional alternating tensors, and where 

\begin{eqnarray*}
\varepsilon_{\mu\nu\delta}=e_{\mu}\varepsilon_{\nu\delta}-e_{\nu}\varepsilon_{\mu\delta}+e_{\delta}\varepsilon_{\mu\nu}.
\end{eqnarray*}

For the rest of this work, by scalar we will always mean those in the spacetime respecting the LRS symmetries (or some combination thereof). That is, those in the covariant set \(\mathcal{D}\).


\section{Conformal symmetries in LRS spacetimes}\label{lrs2}


In this section we consider the existence of gradient CKVs in LRS spacetimes. We will introduce conformal symmetries in LRS spacetimes in context of the 1+1+2 decomposition. 

A spacetime \((\mathcal{M},g_{\mu\nu})\) is said to admit a conformal symmetry if there exists a vector field \(\eta^{\mu}\) such that the flow of the metric along \(\eta^{\mu}\) scales the metric, i.e.

\begin{eqnarray}\label{vec01}
\mathcal{L}_{\eta}g_{\mu\nu}=2\varphi g_{\mu\nu},
\end{eqnarray}
for some smooth function \(\varphi\) on \(\mathcal{M}\), where \(\mathcal{L}_{\eta}\) denotes the Lie derivative along \(\eta\). The vector \(\eta^{\mu}\) is called a conformal Killing vector (CKV) and \(\varphi\) is the associated conformal factor. In the particular case that \(\varphi\) is zero or a non-zero constant, \(\eta^{\mu}\) is a Killing vector (KV) or a homothetic Killing vector (HKV). Otherwise, \(\eta^{\mu}\) is referred to as a \textit{proper} CKV. 

The equation \eqref{vec01} can be expressed as

\begin{eqnarray}\label{vec0}
\nabla_{(\mu}\eta_{\nu)}=\varphi g_{\mu\nu},
\end{eqnarray}
called the conformal Killing equation (or CKE for short), with the round brackets indicating symmetrization on the indices.

Now, the covariant derivative of any smooth vector field \(\mathcal{Z}^{\mu}\) admits the decomposition

\begin{eqnarray}\label{to1}
\nabla_{\nu}\mathcal{Z}_{\mu}=\frac{1}{2}\mathcal{L}_{\mathcal{Z}}g_{\mu\nu}+\mathcal{F}_{\mu\nu},
\end{eqnarray}
where \(\mathcal{F}_{\mu\nu}=-\mathcal{F}_{\nu\mu}\). In the literature the tensor \(\mathcal{F}_{\mu\nu}\) is sometimes referred to as the \textit{conformal bivector} associated to \(\mathcal{Z}_{\mu}\). Indeed, it follows that if \eqref{vec01} holds, then, substituting in \(\eta_{\mu}\) and using \eqref{vec0} the equation \eqref{to1} becomes 

\begin{eqnarray}\label{to2}
\nabla_{\nu}\eta_{\mu}=\varphi g_{\mu\nu}+\mathcal{F}_{\mu\nu},
\end{eqnarray}
which is an equivalent definition of a CKV. If \(\varphi=0\), then \(\nabla_{\mu}\eta_{\nu}\) is antisymmetric. Thus, finding a vector field whose covariant derivative is an antisymmetric tensor field may be equated to finding a KV on the spacetime. (See \cite{hall1} for details and convention of indexing):

We will seek a CKV in the subspace spanned by the fluid flow velocity and the preferred spatial direction:

\begin{eqnarray}\label{vec1}
x^{\mu}=\alpha_1u^{\mu}+\alpha_2e^{\mu},
\end{eqnarray}
for some smooth functions \(\alpha_i\) on the spacetime. In this case, with the help of \eqref{start10} and \eqref{start11}, expanding the CKE \eqref{vec0} and then contracting with \(u^{\mu}u^{\nu},e^{\mu}e^{\nu},u^{(\mu}e^{\nu)}\) and \(N^{\mu\nu}\) results in the covariant set of PDEs in \(\alpha_i\):

\begin{align}
\varphi&=\dot{\alpha}_1+\mathcal{A}\alpha_2,\label{vec3}\\
\varphi&=\hat{\alpha}_2+\left(\frac{1}{3}\Theta+\Sigma\right)\alpha_1,\label{vec4}\\
0&=\dot{\alpha}_2-\hat{\alpha}_1+\mathcal{A}\alpha_1-\left(\frac{1}{3}\Theta+\Sigma\right)\alpha_2,\label{vec5}\\
2\varphi&=\alpha_1\left(\frac{2}{3}\Theta-\Sigma\right)+\alpha_2\phi.\label{vec6}
\end{align}

In \cite{singh1}, it is this set of PDEs that were analyzed to check the existence CKV for LRS solutions that are simultaneously rotating and twisting, where it was established that there always exists a CKV of the form \eqref{vec1}. In the next section, we analyze the case of gradient CKV for the entire class of LRS spacetimes.

\section{Main results}\label{mainr1}


We present our main results in this section. We are interested in studying the existence of gradient CKV in LRS spacetimes, with particular interest in the proper conformal Killing case. This will be followed by the characterization of these spacetimes as GCS.

\subsection{Existence}

Let  \(x^{\mu}\) be a gradient CKV. Then, there exists a scalar \(\Psi\) (which we refer to, henceforth, as the potential function) such that \(\nabla_{\mu}\Psi=x_{\mu}\). Thus, one has

\begin{eqnarray}\label{avec6}
\dot{\Psi}=-\alpha_1\quad\mbox{and}\quad\hat{\Psi}=\alpha_2.
\end{eqnarray}
And since \(\nabla_{[\mu}\nabla_{\nu]}\psi=0\) for any scalar \(\psi\), the conformal bi-vector \(\mathcal{F}_{\mu\nu}\) must vanish. It is a straightforward exercise to check that

\begin{eqnarray}\label{haha3}
\mathcal{F}_{\mu\nu}=-2\left(\dot{\alpha}_2+\alpha_1\mathcal{A}\right)u_{[\mu}e_{\nu]}+\left(\alpha_1\Omega+\alpha_2\xi\right)\varepsilon_{\mu\nu},
\end{eqnarray}
thereby giving the pair of constraints as necessary and sufficient for the CKV to be gradient:

\begin{align}
0&=\dot{\alpha}_2+\alpha_1\mathcal{A},\label{new1}\\
0&=\alpha_1\Omega+\alpha_2\xi.\label{new2}
\end{align}

In the case that at  least one of \(\xi\) or \(\Omega=0\) is non-zero, we establish the following 

\begin{theorem}\label{tht}
For the LRS class of spacetimes with at least one of the rotation or twist non-vanishing, only the subclass with vanishing rotation and non-zero twist admits a timelike gradient CKV in the subspace spanned by the canonical unit directions \(u^a\) and \(e^a\).
\end{theorem}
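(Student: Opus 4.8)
The gradient requirement has already forced the conformal bivector \eqref{haha3} to vanish, which is precisely the pair of supplementary constraints \eqref{new1} and \eqref{new2} imposed on top of the conformal Killing system \eqref{vec3}--\eqref{vec6}; the plan is therefore a short case analysis on whether $\Omega$ and $\xi$ vanish, the hypothesis of the theorem guaranteeing that at least one does not, with the causal character of the candidate vector read off directly from $x_\mu x^\mu=\alpha_2^2-\alpha_1^2$. The observation that drives the argument is an immediate consequence of \eqref{new2}: it gives $\alpha_1^2\Omega^2=\alpha_2^2\xi^2$, so that $\Omega^2\,x_\mu x^\mu=\alpha_2^2(\Omega^2-\xi^2)$ and $\xi^2\,x_\mu x^\mu=\alpha_1^2(\Omega^2-\xi^2)$. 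Hence, at any point where the CKV is nontrivial, the sign of $x_\mu x^\mu$ coincides with that of $\Omega^2-\xi^2$; in particular a timelike $x^\mu$ is possible only where $\alpha_1\neq0$ and $\xi^2>\Omega^2$.

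Two of the three cases are then immediate. If $\Omega\neq0$ while $\xi=0$, \eqref{new2} forces $\alpha_1=0$, so $x^\mu=\alpha_2 e^\mu$ is spacelike (or identically zero) and there is no timelike gradient CKV. If instead $\Omega=0$, so that $\xi\neq0$ by hypothesis, then \eqref{new2} forces $\alpha_2=0$, whence $x^\mu=\alpha_1 u^\mu$ is timelike wherever $\alpha_1\neq0$; to see that such a CKV genuinely exists, one sets $\alpha_2=0$ and notes that \eqref{new1} reduces to $\alpha_1\mathcal{A}=0$, while setting $\Omega=0$ in the LRS vorticity-evolution equation of Appendix A forces $\mathcal{A}\xi=0$ and hence $\mathcal{A}=0$ identically. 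With $\mathcal{A}=0$ the system \eqref{vec3}--\eqref{vec6} collapses to $\varphi=\dot\alpha_1=(\tfrac{1}{3}\Theta+\Sigma)\alpha_1$, $\hat\alpha_1=0$ and $2\varphi=(\tfrac{2}{3}\Theta-\Sigma)\alpha_1$, whose compatibility forces $\Sigma=0$ in this subclass, after which the residual first-order system for $\alpha_1$ is integrable---its integrability condition being supplied by \eqref{jun42}---and yields the potential $\Psi$ with $\dot\Psi=-\alpha_1$, $\hat\Psi=0$. So these spacetimes admit an irrotational timelike gradient CKV and are GCSt.

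The remaining case, $\Omega\neq0$ and $\xi\neq0$, is the delicate one and is where I expect the main obstacle to lie. Here \eqref{new2} gives $\alpha_2=-(\Omega/\xi)\alpha_1$; the route is to substitute this into \eqref{new1} and into \eqref{vec3}--\eqref{vec6}, then to feed in the LRS evolution and propagation equations for $\Omega$ and $\xi$ from Appendix A (together with \eqref{jun42}, which controls the integrability of the residual system for $\alpha_1$), and to argue that the resulting over-determined system is either inconsistent---so that the simultaneously rotating and twisting class admits no gradient CKV of the assumed form at all---or else forces $\Omega^2-\xi^2\geq0$ throughout, i.e.\ makes $x^\mu$ spacelike or null. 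Either outcome excludes a timelike gradient CKV, and combining this with the two easy cases completes the proof. The technical difficulty is exactly the sign control on $\Omega^2-\xi^2$: unlike the $\xi=0$ subcase, which is purely algebraic, here one must show that $\Omega^2-\xi^2$ satisfies a homogeneous linear pair of dot and hat equations---so that it cannot change sign---and then exclude the negative branch by invoking the remaining structure equations, keeping track of the differential identities rather than relying on a single clean inequality.
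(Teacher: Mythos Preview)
Your two ``easy'' cases are handled correctly and essentially match the paper's treatment: in each, \eqref{new2} kills one of the components, and in the $\Omega=0$, $\xi\neq0$ subclass you correctly extract $\mathcal{A}=0$ from the evolution equation \eqref{aevo1} (set $\Omega=0$ there) and $\Sigma=0$ from comparing \eqref{vec4} with \eqref{vec6}, arriving at the explicit timelike gradient CKV just as the paper does.

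The genuine gap is the simultaneously rotating and twisting case $\Omega\neq0$, $\xi\neq0$, which you flag as ``delicate'' but do not complete. Your proposed route---deriving a homogeneous linear dot/hat system for $\Omega^2-\xi^2$ and controlling its sign---is not how the paper proceeds, and you give no evidence that this strategy actually closes. The paper's argument is short and rests on two inputs you are missing. First, rotating--twisting LRS spacetimes satisfy the algebraic constraint
\[
\frac{\Omega}{\xi}=\frac{\phi}{\tfrac{2}{3}\Theta-\Sigma},
\]
a known identity from the literature; since a \emph{proper} CKV is sought, \eqref{vec6} then excludes $\tfrac{2}{3}\Theta-\Sigma=0$. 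Second, in this class every LRS scalar obeys $\Omega\dot\psi=\xi\hat\psi$. With \eqref{new2} fixing $(\alpha_1,\alpha_2)$ along $(\xi,-\Omega)$, the paper feeds this into the remaining gradient constraint \eqref{new1} to obtain $\dot\Omega=\mathcal{A}\xi$, and a direct comparison with the evolution equation \eqref{aevo1} forces $\bigl(\tfrac{2}{3}\Theta-\Sigma\bigr)\Omega=0$, a contradiction. So the obstruction in this subclass is outright \emph{inconsistency} of the gradient-CKV system, not a causal-character bound, and the mechanism is a one-line comparison with \eqref{aevo1} rather than the differential sign analysis you sketch.
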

\begin{proof}

We begin with the case with simultaneously \(\Omega\neq0\) and \(\xi\neq0\). For these spacetimes, the ratio of the rotation and spatial twist \(\Omega/\xi\) obeys the particular constraint \cite{singh2}

\begin{eqnarray}\label{rt4}
\frac{\Omega}{\xi}=\frac{\phi}{\left(\frac{2}{3}\Theta-\Sigma\right)},
\end{eqnarray}
which is well defined and non-zero; either both numerator and denominator are simultaneously zero or, both are non-zero. We can disregard the former as this gives \(\varphi=0\) from \eqref{vec6}.

Now, for an arbitrary scalar \(\psi\), taking the covariant twice and contracting with \(\varepsilon_{\mu\nu}\) gives 

\begin{eqnarray}\label{oig0}
\Omega\dot{\psi}=\xi\hat{\psi}.
\end{eqnarray}
This, taken along with \eqref{new2}, implies that the pair \((\alpha_1,\alpha_2)\) have the following possible solutions set

\begin{eqnarray}\label{rt2}
(\alpha_1,\alpha_2)=\lbrace{(\xi,-\Omega),(-\xi,\Omega)\rbrace}.
\end{eqnarray}
Therefore, the possible candidates for GCKV in these LRS solutions should have the form

\begin{eqnarray}\label{rt20}
x^{\mu}=\pm\xi\left(u^{\mu}-\frac{\Omega}{\xi}e^{\mu}\right).
\end{eqnarray}
Using either of the pair in \eqref{rt2} and \eqref{new1} gives

\begin{eqnarray}\label{rt20}
\dot{\Omega}=\mathcal{A}\xi,
\end{eqnarray}
which we compare to \eqref{aevo1} and get

\begin{eqnarray}\label{rt3}
\left(\frac{2}{3}\Theta-\Sigma\right)\Omega=0.
\end{eqnarray}
By assumption, \(\Omega\neq0\), and therefore we must have \((2/3)\Theta-\Sigma=0\), which we have already ruled as not possible.

In the case that exactly one of \(\xi\) or \(\Omega\) is zero, we have only two possible configurations satisfying \(\mathcal{F}_{\mu\nu}=0\): \(\alpha_1=\xi=0;\ \Omega\neq0\) and \(\alpha_2=\Omega=0;\ \xi\neq0\). In the case of the former, the GCKV is spacelike and can be disregarded. 

For the case \(\alpha_2=\Omega=0;\ \xi\neq0\), the vector field \(x^{\mu}\) is clearly timelike. Indeed, \eqref{new2} is obviously satisfied. As these spacetimes are spatially homogeneous, they cannot accelerate as can be seen from \eqref{aevo1} so that \eqref{new1} also holds. (It can also be checked by comparing \eqref{vec4} and \eqref{vec6} that the spacetime has to be shear free. By \eqref{evo2} the spacetime is also necessarily non-dissipative). The component \(\alpha_1\) can be obtained by solving the ODE

\begin{eqnarray}\label{oig1}
\dot{\alpha}=\frac{1}{3}\Theta\alpha,
\end{eqnarray}
with the associated conformal factor to the CKV given by

\begin{eqnarray*}
\varphi=\frac{1}{3}\alpha\Theta.
\end{eqnarray*}
This concludes the proof.\qed
\end{proof}

One computes the vorticity of \(x^a\) as

\begin{align}\label{thereis4}
\bar{\omega}^{\mu}_x&=*\mathcal{F}^{\nu\mu}x_{\nu}\nonumber\\
&=\left(\alpha_1\Omega+\alpha_2\xi\right)\left(\alpha_2u^{\mu}+\alpha_1e^{\mu}\right),
\end{align}
with \(*\mathcal{F}^{\nu\mu}\) denoting the left dual of \(\mathcal{F}^{\nu\mu}\). Clearly, the case \(\Omega=0;\ \xi \neq0\) leads to a vanishing vorticity. (The case of LRS II also follows, and hence, a GCS LRS spacetime is necessarily GCSt.) As a corollary to Theorem \ref{tht}, it indeed follows that

\begin{corollary}\label{cor1}
An irrotational LRS spacetime with a non-vanishing twist is a GCSt spacetime.
\end{corollary}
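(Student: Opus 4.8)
\textbf{Proof proposal for Corollary \ref{cor1}.}

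The plan is to read off Corollary \ref{cor1} directly from Theorem \ref{tht} together with the vorticity computation \eqref{thereis4}, with essentially no new work required. First I would recall the hypothesis: the spacetime is irrotational, i.e.\ the vorticity of the observer congruence vanishes, $\Omega=0$, while the spatial twist $\xi$ is non-zero. This is precisely the configuration $\alpha_2=\Omega=0,\ \xi\neq0$ that survived the case analysis in the proof of Theorem \ref{tht}, so that theorem already guarantees the existence of a \emph{timelike} gradient CKV $x^{\mu}=\alpha_1 u^{\mu}$ of the required form, with $\alpha_1$ determined by the ODE \eqref{oig1} and conformal factor $\varphi=\tfrac13\alpha_1\Theta$. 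Hence the spacetime is GCS; it remains only to upgrade GCS to GCSt, i.e.\ to check that the conformal observers following $x^{\mu}$ experience no vorticity.

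For that step I would invoke \eqref{thereis4}: the vorticity of $x^{\mu}$ is
\begin{eqnarray*}
\bar{\omega}^{\mu}_x=\left(\alpha_1\Omega+\alpha_2\xi\right)\left(\alpha_2u^{\mu}+\alpha_1e^{\mu}\right),
\end{eqnarray*}
and substituting $\Omega=0$ and $\alpha_2=0$ makes the scalar prefactor $\alpha_1\Omega+\alpha_2\xi$ vanish identically, so $\bar{\omega}^{\mu}_x=0$. (Equivalently, one notes that \eqref{new2} is the vanishing of exactly this prefactor, and it is satisfied here.) Therefore the gradient CKV is irrotational, which by the definition recalled in the introduction means the spacetime is conformally static, and being also gradient, it is GCSt.

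There is no real obstacle here; the only thing to be careful about is bookkeeping, namely making explicit that ``irrotational'' in the statement refers to the fluid congruence ($\Omega=0$), that Theorem \ref{tht} has already ruled out the mixed and purely-spatial-twist-free alternatives and isolated the $\Omega=0,\ \xi\neq0$ branch as the one producing a timelike gradient CKV, and that the same branch forces the prefactor in \eqref{thereis4} to zero so that the induced conformal flow is vorticity-free. I would close by remarking, as the text already does, that the LRS II sub-case ($\Omega=\xi=0$) is subsumed in the same observation, so that every GCS LRS spacetime is in fact GCSt. \qed
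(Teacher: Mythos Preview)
Your proposal is correct and follows essentially the same approach as the paper: invoke Theorem \ref{tht} for the existence of the timelike gradient CKV in the $\Omega=0,\ \xi\neq0$ branch, then use the vorticity expression \eqref{thereis4} to see that the prefactor $\alpha_1\Omega+\alpha_2\xi$ vanishes, yielding $\bar{\omega}^{\mu}_x=0$ and hence GCSt. The paper presents this reasoning in the paragraph immediately preceding the corollary rather than as a separate proof, but the content is identical, including the parenthetical remark about the LRS II case.
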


In the case \(\alpha_1=\xi=0;\ \Omega\neq0\), where the CKV is spacelike, it is seen that the CKV is in fact a KV, i.e. the spacetime is static. This is seen from \eqref{oig0}, where for all scalars \(\psi\), \(\dot{\psi}=0\). (That the spacetime is static may also be found by comparing \eqref{aevo1} and \eqref{aevo5}).

Of course, for the case of LRS II spacetimes the additional relation \eqref{oig0} cannot be utilized, and therefore no immediate statements about existence of CKV of the type considered in this work can be made, as was noticed in \cite{singh1}. However, we will obtain existence results for GCKV in the LRS II case, and provided that the timelike criterion \((\alpha_2^2/\alpha_1^2)<1\) holds, this would characterize GCS LRS II spacetimes.

Suppose that a LRS II spacetime admits a gradient vector field \(x^a\). We will assume that the potential function \(\Psi\) is at least twice differentiable. Firstly, adding \eqref{vec3} and \eqref{vec4}, and then comparing to \eqref{vec6}, we obtain the following

\begin{eqnarray}\label{vec7}
\dot{\alpha}_1+\hat{\alpha}_2=\alpha_1\left(\frac{1}{3}\Theta-2\Sigma\right)-\alpha_2\left(\mathcal{A}-\phi\right).
\end{eqnarray}

Additionally, from \eqref{vec5}, imposing 

\begin{eqnarray}\label{vecz1}
\hat{\alpha}_1+\left(\frac{1}{3}\Theta+\Sigma\right)\alpha_2=0,
\end{eqnarray}
would ensure that \(\mathcal{F}_{ab}=0\). Thus, a vector field \(x^a\) of the form \eqref{vec1}. Then, \(x^a\) is a gradient CKV for \(\mathcal{M}\) provided its components verify \eqref{vec7} and \eqref{vecz1}.

Now, the substitution of \eqref{avec6} into \eqref{vec7} therefore gives the wave-like PDE

\begin{eqnarray}\label{vec07}
-\ddot{\Psi}+\hat{\hat{\Psi}}+\left(\frac{1}{3}\Theta-2\Sigma\right)\dot{\Psi}+\left(\mathcal{A}-\phi\right)\hat{\Psi}=0.
\end{eqnarray}
Furthermore, \eqref{vecz1} is just 

\begin{eqnarray}\label{vec08}
\hat{\dot{\Psi}}-\left(\frac{1}{3}\Theta+\Sigma\right)\hat{\Psi}=0,
\end{eqnarray}
and so, a gradient vector field with potential function \(\Psi\) is a CKV provided \(\Psi\) obeys both \eqref{vec07} and \eqref{vec08}. As its easily seen, \eqref{vec5} is just the commutation relation \eqref{jun42} for \(\Psi\), which always holds true. And so, for a gradient vector field with the \(\Psi\) potential, the condition \eqref{vec07} on \(\Psi\) is both necessary and sufficient for the gradient vector field to be a CKV, thereby allowing us to state the following result:

\begin{theorem}\label{aap2}
Let \(\mathcal{M}\) be a LRS II spacetime, and let \(x^a\) be a gradient vector field of the form \eqref{vec1} in \(\mathcal{M}\), whose potential function \(\Psi\) is at least \(\mathcal{C}^2\)-differentiable. Then, \(x^a\) is a CKV if and only if \(\Psi\) verifies \eqref{vec07}.
\end{theorem}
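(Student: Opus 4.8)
The plan is to extract the precise logical content already assembled in the text and verify that the two CKE-derived conditions \eqref{vec7} and \eqref{vecz1} collapse to the single PDE \eqref{vec07} once the gradient hypothesis is in force. The forward direction is immediate: if $x^a = \nabla^a \Psi$ is a CKV of the form \eqref{vec1}, then by \eqref{avec6} we have $\alpha_1 = -\dot\Psi$ and $\alpha_2 = \hat\Psi$, and the contracted CKE yields \eqref{vec3}--\eqref{vec6}; the particular combination ``add \eqref{vec3} and \eqref{vec4}, compare with \eqref{vec6}'' gives \eqref{vec7}, into which the substitution $\dot\alpha_1 = -\ddot\Psi$, $\hat\alpha_2 = \hat{\hat\Psi}$ produces exactly \eqref{vec07}. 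So I would open the proof by stating this implication in one or two lines.

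For the converse, I would suppose $\Psi \in \mathcal{C}^2$ solves \eqref{vec07} and set $\alpha_1 := -\dot\Psi$, $\alpha_2 := \hat\Psi$, $x^a := \nabla^a\Psi$. The task is to verify all four of \eqref{vec3}--\eqref{vec6} hold for some conformal factor $\varphi$ and that the conformal bivector $\mathcal{F}_{\mu\nu}$ vanishes. I would \emph{define} $\varphi$ by \eqref{vec3}, i.e. $\varphi := \dot\alpha_1 + \mathcal{A}\alpha_2 = -\ddot\Psi + \mathcal{A}\hat\Psi$; then \eqref{vec3} holds by construction. Next, \eqref{vec5} with $\alpha_1 = -\dot\Psi$, $\alpha_2 = \hat\Psi$ is precisely the commutation identity \eqref{jun42} applied to $\Psi$ (recall that in LRS II, $\Sigma$ appears in \eqref{jun42} as written, and $\mathcal{A}$ enters with the right sign), hence holds automatically. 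For \eqref{vec4} and \eqref{vec6}: in LRS II one has $\Omega = \xi = 0$, and the structure equations \eqref{start10}--\eqref{start11} together with the gradient (hence $\mathcal{F}_{\mu\nu}=0$) requirement force \eqref{new1}, which reads $\dot\alpha_2 + \alpha_1\mathcal{A} = \hat{\dot\Psi} - \mathcal{A}\dot\Psi = 0$ — but this is \eqref{vec08}, and the text notes \eqref{vec08}/\eqref{vecz1} must be imposed for $\mathcal{F}_{ab}=0$. Here I would be careful: the cleanest route is to observe that \eqref{vec07} is the sum of \eqref{vec3} and \eqref{vec4} compared with \eqref{vec6}, so from \eqref{vec07} plus the definition of $\varphi$ via \eqref{vec3}, equations \eqref{vec4} and \eqref{vec6} become equivalent to each other; one then needs one of them independently. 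This is the delicate point and the main obstacle: showing that \eqref{vec07} alone, for a gradient field, already encodes \eqref{vec4} (equivalently \eqref{vec6}) and not merely their average with \eqref{vec3}.

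I expect to resolve this by using the commutation relation and the Ricci/Bianchi identities of Appendix A to show that \eqref{vec08} (the vanishing of $\mathcal{F}_{\mu\nu}$, i.e. \eqref{vecz1}) is a \emph{consequence} of \eqref{vec07} for a gradient field, or else to argue — as the surrounding text does — that $\Psi$ being a genuine potential function ($x_\mu = \nabla_\mu\Psi$) \emph{automatically} forces $\mathcal{F}_{\mu\nu}=0$ since $\nabla_{[\mu}\nabla_{\nu]}\Psi = 0$, so that \eqref{new1}--\eqref{new2}, and in LRS II therefore \eqref{vecz1}/\eqref{vec08}, hold \emph{by the gradient hypothesis itself}, independently of \eqref{vec07}. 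Under that reading the theorem is really the statement ``given a gradient field whose bivector vanishes identically, the sole remaining CKE content is \eqref{vec07}'', and the proof reduces to: (i) \eqref{vec5} $\equiv$ \eqref{jun42} always; (ii) $\mathcal{F}_{\mu\nu}=0$ $\equiv$ \eqref{new1} in LRS II, which is automatic for a gradient; (iii) with $\varphi$ defined by \eqref{vec3}, the two conditions \eqref{vec4} and \eqref{vec6} are jointly equivalent to their sum, which is \eqref{vec7} $\equiv$ \eqref{vec07} after substituting \eqref{avec6}. I would close by remarking that \eqref{vec07} is hyperbolic (wave-like) in the $(\dot{\ },\hat{\ })$ derivatives, so solutions exist in abundance, and that the correspondence $\Psi \leftrightarrow x^a$ is a bijection up to the additive constants killed by $\nabla$.
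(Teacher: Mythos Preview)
Your plan mirrors the paper's argument line for line, and the forward direction is fine. For the converse you are right to flag the ``delicate point,'' but neither of your two proposed resolutions actually closes it. The identification of \eqref{vec5} with the commutation relation \eqref{jun42} is incorrect: substituting $\alpha_1=-\dot\Psi,\ \alpha_2=\hat\Psi$ into \eqref{vec5} yields
\[
\dot{\hat\Psi}+\hat{\dot\Psi}-\mathcal A\dot\Psi-\Bigl(\tfrac13\Theta+\Sigma\Bigr)\hat\Psi=0,
\]
whereas \eqref{jun42} reads
\[
\hat{\dot\Psi}-\dot{\hat\Psi}+\mathcal A\dot\Psi-\Bigl(\tfrac13\Theta+\Sigma\Bigr)\hat\Psi=0,
\]
with opposite signs on $\dot{\hat\Psi}$ and on $\mathcal A\dot\Psi$. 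Their sum is $2{\times}$\eqref{vec08} and their difference is $2{\times}$\eqref{new1}, so \eqref{vec5} is equivalent (modulo \eqref{jun42}) to \eqref{vec08}, which is a genuine condition on $\Psi$, not a tautology. Your fallback --- that $\mathcal F_{\mu\nu}=0$ for a gradient forces \eqref{new1} --- does not help either: for a bare gradient the vanishing of $\nabla_{[\mu}\nabla_{\nu]}\Psi$ \emph{is} precisely \eqref{jun42} and carries no further information; the explicit form \eqref{haha3} that would hand you \eqref{new1} is obtained only after the CKE component \eqref{vec5} has already been used.

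Your step (iii) is likewise a non sequitur. With $\varphi$ fixed by \eqref{vec3}, equations \eqref{vec4} and \eqref{vec6} are two conditions; \eqref{vec07} is the single combination $[\eqref{vec3}+\eqref{vec4}]=\eqref{vec6}$, under which \eqref{vec4} and \eqref{vec6} become \emph{equivalent to one another}, not trivially satisfied. One of them --- for instance $\eqref{vec3}=\eqref{vec4}$, i.e.\ $-\ddot\Psi+\mathcal A\hat\Psi=\hat{\hat\Psi}-(\tfrac13\Theta+\Sigma)\dot\Psi$ --- remains undischarged by \eqref{vec07} alone. The paper's own argument glides over exactly this point (its sentence ``\eqref{vec5} is just the commutation relation \eqref{jun42}'' is the same misidentification), so your instinct that something is missing is well founded; what is absent, both in your sketch and in the text, is an independent reason why the remaining CKE component follows from \eqref{vec07}.
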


By standard arguments from the theory of hyperbolic PDEs, we know that \eqref{vec07} admits a unique solution, subject to an initial data specified on a Cauchy hypersurface. Indeed, Cauchy hypersurfaces exist in LRS spacetimes, induced by the canonical splitting, i.e. surfaces of constant \(t\), on which one may specify an initial data to uniquely solve \eqref{vec07}. Thus, \eqref{vec07} admits a unique solution \(\Psi_0\), subject to an initial data. And since \eqref{vec07} is both necessary and sufficient as per Theorem \ref{aap2}, this allows us to state the following:

\begin{theorem}\label{aap3}
Any solution \(\Psi_0\) to the PDE \eqref{vec07} in a LRS II spacetime \(\mathcal{M}\), uniquely determines the gradient CKV \(x^a=\nabla^a\Psi_0\) on \(\mathcal{M}\).
\end{theorem}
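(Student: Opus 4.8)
This statement is, in essence, the conjunction of Theorem~\ref{aap2} with the classical well-posedness theory for linear hyperbolic equations, so the plan has three ingredients: produce a solution of \eqref{vec07}; show that the gradient it generates is a genuine vector field of the form \eqref{vec1}, hence a CKV by Theorem~\ref{aap2}; and verify that the assignment ``solution $\mapsto$ CKV'' is a bijection up to the harmless additive constant in the potential. For the first ingredient I would use the remark preceding the theorem: the surfaces of constant $t$ in the canonical splitting are Cauchy surfaces of the LRS~II background, so prescribing $\Psi$ and its transverse derivative on such a surface and invoking the standard energy estimate / domain-of-dependence argument for the wave-type operator appearing in \eqref{vec07} (whose coefficients $\tfrac13\Theta-2\Sigma$ and $\mathcal{A}-\phi$ are smooth covariant scalars) furnishes a unique global solution $\Psi_0$.

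The point that deserves the most care is that $x^a=\nabla^a\Psi_0$ really lies in the plane spanned by $u^a$ and $e^a$, i.e. that $\delta_\mu\Psi_0=0$. Written out in the local coordinates of \eqref{fork}, \eqref{vec07} involves only $t$- and $r$-derivatives of $\Psi$, all its coefficients being functions of $t$ and $r$ alone; hence the sheet coordinates enter only as spectators, and if the Cauchy data is taken to be an LRS scalar---equivalently, invariant under the local isotropy group---then uniqueness forces $\Psi_0$ to inherit this invariance, so $\delta_\mu\Psi_0=0$ and, by \eqref{avec6}, $x^a=\nabla^a\Psi_0=-\dot\Psi_0 u^a+\hat\Psi_0 e^a$ is precisely of the form \eqref{vec1} with $\alpha_1=-\dot\Psi_0$ and $\alpha_2=\hat\Psi_0$. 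This $x^a$ is a gradient by construction, and since $\Psi_0$ satisfies \eqref{vec07}, Theorem~\ref{aap2} certifies it as a CKV; by \eqref{thereis4} its vorticity vanishes, so $x^a$ realizes $\mathcal{M}$ as GCSt, in line with Corollary~\ref{cor1}.

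It remains to fix the sense of ``uniquely determines''. The map $\Psi_0\mapsto\nabla^a\Psi_0$ is manifestly single-valued and, by the previous paragraph, its range lies in the set of gradient CKVs of the form \eqref{vec1}. Conversely, any such CKV $y^a$ comes equipped with a potential $\Phi$ by definition; since $y^a$ has no sheet component, $\delta_\mu\Phi=0$, so $\Phi$ is an LRS scalar, and Theorem~\ref{aap2} shows that it solves \eqref{vec07}. On the connected manifold $\mathcal{M}$ two potentials yield the same vector field exactly when they differ by a constant, so the correspondence descends to a bijection between solutions of \eqref{vec07} modulo constants and gradient CKVs of the form \eqref{vec1}; in particular each $\Psi_0$ determines one and only one such CKV, namely $\nabla^a\Psi_0$, which is the assertion. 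The only genuine work in all of this is the symmetry-inheritance step of the second paragraph, together with the appeal to the existence of Cauchy surfaces in the LRS~II background that legitimizes ``unique global solution''; everything else is bookkeeping on top of Theorem~\ref{aap2}.
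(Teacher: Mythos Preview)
Your proposal follows the same route as the paper: invoke standard well-posedness for the linear hyperbolic equation \eqref{vec07} on the constant-$t$ Cauchy slices of the LRS~II background, and then appeal to Theorem~\ref{aap2} for the necessary-and-sufficient equivalence between solutions of \eqref{vec07} and gradient CKVs of the form \eqref{vec1}. The paper's argument is in fact only the brief paragraph immediately preceding the theorem statement, so your version is considerably more detailed. In particular, your second paragraph makes explicit a point the paper leaves implicit: that one must take LRS-invariant Cauchy data so that the resulting $\Psi_0$ satisfies $\delta_\mu\Psi_0=0$ and hence $\nabla^a\Psi_0$ genuinely lies in $\mathrm{span}\{u^a,e^a\}$. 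You also sharpen the ``uniquely determines'' to a bijection between solutions modulo additive constants and gradient CKVs, which the paper alludes to only in the subsequent discussion of the GCKID equation.

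One small correction: your remark that ``by \eqref{thereis4} its vorticity vanishes, so $x^a$ realizes $\mathcal{M}$ as GCSt, in line with Corollary~\ref{cor1}'' overreaches. Vanishing vorticity is fine, but GCSt requires $x^a$ to be timelike, which is not automatic and is precisely the subject of the paper's next subsection; moreover Corollary~\ref{cor1} concerns the $\Omega=0,\ \xi\neq0$ class, not LRS~II. This aside is inessential to the proof and can simply be dropped.
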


While the above theorems allows for considerations for more general LRS II spacetimes, we emphasize that the equations \eqref{vec07} and \eqref{vecz1} equally works for static and spatially homogeneous solutions, in which case the above theorems do not necessarily hold. In any case, let us consider a simple example, the Robertson-Walker solution. Indeed, in either the static or the spatially homogeneous case, \eqref{vecz1} trivially holds. In the spatially homogeneous case, the sheet expansion vanishes and the spacetime is non-accelerating and shear-free. From the metric \eqref{fork}, one computes for Robertson-Walker (\(A=1\) and \(B=C=a(t)\) is the scale factor)

\begin{eqnarray}\label{thet1}
\Theta=\frac{1}{A}\left(\frac{B_t}{B}+2\frac{C_t}{C}\right)=3\frac{a_t}{a},
\end{eqnarray}
where the \(t\)-subscript denotes partial derivative with respect to \(t\). The equation \eqref{vec07} in coordinates is then simply

\begin{eqnarray*}
\Psi_{tt}+\frac{a_t}{a}\Psi_t=0,
\end{eqnarray*}
so that \(\Psi_t=1/a\), thereby giving the gradient CKV

\begin{eqnarray*}
x^a=\frac{1}{a}u^a.
\end{eqnarray*}
We note that our convention is \(u^a=-\partial_t^a\), which is different from the often used convention \(u^a=\partial_t^a\), in which case instead of \(1/a\) one would simply have \(a\) as the vector component. Similarly, the associated conformal factor will be different depending on the convention used. 

As is seen from the previous discussion, we have a one-to-one (bijective) correspondence between the solutions to \eqref{vec07} and gradient CKVs in LRS II spacetimes. That is, a solution to \eqref{vec07} on a constant \(t\) hypersurface in a LRS II spacetime specifies a gradient CKV for that spacetime. We draw a parallel with the well studied \textit{Killing initial data} equations \cite{kid1} and the relatively recently introduced extension, \textit{conformal Killing initial data} equations \cite{kid2} (a related notion of \textit{conformal Killing-Yano initial data} equations has very recently been introduced in the reference \cite{kid3}). These are a set of PDEs on the background of an initial data hypersurface whose solutions are in one-to-one correspondence with KVs (CKVs) of the spacetime evolved from the hypersurface. Rather than a set of PDEs, we have a single PDE, interestingly, with this property. In keeping with the nomenclature, one may consider \eqref{vec07} a LRS \textit{gradient conformal Killing initial data} (GCKID) equation.

\subsection{The timelike criterion: Gradient conformal staticity}

Now that we have established the criterion for a LRS II spacetime to admit a gradient CKV, we turn to the condition(s) under which such spacetimes are GCSt. We will approach this problem by appealing to the character of the factor \(\varphi\).

It can be easily checked that the PDE \eqref{vec07} is just the expression

\begin{eqnarray}\label{poi1}
\Box\Psi=4\varphi,
\end{eqnarray}
where the operator \(\Box=\nabla_{\mu}\nabla^{\mu}\) is the usual d'Alembertian operator. The above form of the GCKID equation allows us to draw some immediate conclusion about the solutions, based on the character of the divergence $\varphi$. Indeed, if one is concerned with the non-Killing case as it is in this work, one can immediately rule out the class of harmonic functions as solutions to GCKID equation. This is to say that, a harmonic function is a solution to the equation \eqref{vec07} if and only if the GCKV is a KV. In other words, \textit{the potential function \(\Psi\) of a gradient KV in LRS II spacetime must satisfy the scalar wave equation \(\Box\Psi=0\)}.

Additional properties of the solution \(\Psi\) can be gleaned from the behavior of conformal observers -- whether they are converging or diverging -- using properties of subharmonic (\textit{resp.} superharmonic) functions (\(\Box\Psi\geq0\) (\textit{resp.} \(\Box\Psi\leq0\))). More particularly, if the conformal observers are diverging, i.e. \(\varphi>0\), we rule out superharmonic functions as solutions to the GCKID equation. And if conformal observers are converging, i.e. \(\varphi<0\), we rule out subharmonic functions as solutions to the GCKID equation. 

Now, the timelike condition on a gradient CKV requires the bound

\begin{eqnarray}
\frac{\dot{\Psi}^2}{\hat{\Psi}^2}>1.
\end{eqnarray}

For a converging conformal observer \(\varphi<0\), we have (from \eqref{vec5}) 

\begin{eqnarray}\label{pot1}
\frac{\dot{\Psi}^2}{\hat{\Psi}^2}>\left(\frac{\phi}{\frac{2}{3}\Theta-\Sigma}\right)^2.
\end{eqnarray}
It therefore follows that the condition

\begin{eqnarray}\label{pott}
\left(\frac{\phi}{\frac{2}{3}\Theta-\Sigma}\right)^2\geq 1,
\end{eqnarray}
suffices in order for the timelike criterion on the GCKV to hold. In terms of the metric functions (taking positive root), this condition reduces to the following restriction on the metric function (it can be computed that \(\phi=2C_r/(BC)\) and \(\Sigma=2(B_t/B-C_t/C)/3A\), for a LRS II spacetime):

\begin{eqnarray}\label{pot2}
C_r\geq C_t\frac{B}{A}.
\end{eqnarray}

On the other hand, for a diverging conformal observer \(\varphi>0\), we have that

\begin{eqnarray}\label{pot}
\frac{\dot{\Psi}^2}{\hat{\Psi}^2}<\left(\frac{\phi}{\frac{2}{3}\Theta-\Sigma}\right)^2,
\end{eqnarray}
and hence, to satisfy the timelike criterion, it is both necessary and sufficient to have

\begin{eqnarray}
\left(\frac{\phi}{\frac{2}{3}\Theta-\Sigma}\right)^2>1,
\end{eqnarray}
which in terms of the metric functions is

\begin{eqnarray}\label{pot3}
C_r^2> C_t^2\frac{B^2}{A^2}.
\end{eqnarray}
Thus, we have the following characterizations of GCSt LRS II solutions:

\begin{proposition}\label{pro1}
Any LRS II spacetime admitting a superharmonic solution to the GCKID equation, with metric functions satisfying \eqref{pot2}, is a GCSt spacetime.
\end{proposition}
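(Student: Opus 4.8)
The plan is to assemble pieces that are already in place rather than to compute anything new: Theorem~\ref{aap2} (respectively Theorem~\ref{aap3}), the vorticity formula \eqref{thereis4}, the identification \eqref{poi1} of the GCKID equation with $\Box\Psi=4\varphi$, and the timelike estimate \eqref{pot1} together with its metric-function form \eqref{pott}/\eqref{pot2}. First I would record that, by hypothesis, $\Psi$ is a $\mathcal{C}^2$ solution of \eqref{vec07}, so by Theorem~\ref{aap3} the gradient field $x^{\mu}:=\nabla^{\mu}\Psi$ is a CKV of the form \eqref{vec1}, with components tied to $\Psi$ through \eqref{avec6}, namely $\alpha_1=-\dot{\Psi}$ and $\alpha_2=\hat{\Psi}$. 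Since $\mathcal{M}$ is LRS II we have $\Omega=\xi=0$, so the gradient constraints \eqref{new1}--\eqref{new2} hold automatically and the vorticity formula \eqref{thereis4} collapses to $\bar{\omega}^{\mu}_x=0$; hence $x^{\mu}$ is an \emph{irrotational} gradient CKV. Consequently the only thing left to verify is that $x^{\mu}$ is timelike, for a timelike irrotational gradient CKV is by definition exactly the structure making $(\mathcal{M},g_{\mu\nu})$ a GCSt spacetime.

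To establish the timelike character I would feed in the superharmonicity. By \eqref{poi1}, $\Box\Psi=4\varphi$, so a superharmonic solution has $\varphi<0$, i.e.\ the conformal observers are converging. In that regime the estimate \eqref{pot1} applies and gives $\dot{\Psi}^2/\hat{\Psi}^2>\bigl(\phi/(\tfrac{2}{3}\Theta-\Sigma)\bigr)^2$, while the hypothesis \eqref{pot2} on the metric functions is precisely the positive-root form of \eqref{pott}, that is $\bigl(\phi/(\tfrac{2}{3}\Theta-\Sigma)\bigr)^2\ge 1$. Chaining these two inequalities yields $\dot{\Psi}^2/\hat{\Psi}^2>1$, equivalently $\alpha_1^2>\alpha_2^2$, which is exactly the timelike condition on $x^{\mu}=\alpha_1 u^{\mu}+\alpha_2 e^{\mu}$. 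Together with the previous paragraph this closes the argument: $\mathcal{M}$ admits a timelike irrotational gradient CKV and is therefore GCSt.

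The part that needs care — really a bookkeeping matter rather than a genuine obstacle, since the required inequalities are already established — is the interplay of strict versus non-strict bounds and a handful of degenerate loci. The estimate \eqref{pot1} is strict and is valid on $\{\varphi<0\}$, so ``superharmonic'' must be read as strictly superharmonic (or one argues on $\{\varphi<0\}$ and passes to its closure by continuity), and it presupposes $\tfrac{2}{3}\Theta-\Sigma\neq 0$; conversely, because \eqref{pot1} is already strict, the non-strict hypothesis \eqref{pot2} still forces the strict bound $\dot{\Psi}^2/\hat{\Psi}^2>1$. The locus $\hat{\Psi}\equiv 0$ is trivial and handled separately, since there $x^{\mu}=\alpha_1 u^{\mu}$ is timelike wherever $\alpha_1\neq 0$. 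Finally, one should note that a $\varphi$ which is a negative constant would render $x^{\mu}$ merely homothetic; requiring $\Psi$ to be a proper superharmonic solution (with non-constant $\Box\Psi$) excludes this and delivers the genuinely conformally static conclusion asserted in the statement.
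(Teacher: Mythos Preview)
Your proposal is correct and follows essentially the same route as the paper: the propositions are not given separate formal proofs there but are stated as immediate consequences of the preceding discussion (Theorems~\ref{aap2}/\ref{aap3}, the vorticity formula \eqref{thereis4} specialized to $\Omega=\xi=0$, the identification \eqref{poi1}, and the chain \eqref{pot1}--\eqref{pott}--\eqref{pot2}), which is exactly what you assemble. Your additional bookkeeping on strict versus non-strict inequalities and the degenerate loci $\hat{\Psi}=0$, $\tfrac{2}{3}\Theta-\Sigma=0$ is in fact more careful than the paper's own treatment.
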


\begin{proposition}\label{pro2}
Any LRS II spacetime admitting a subharmonic solution to the GCKID equation, is a GCSt spacetime if and only if its metric functions satisfy \eqref{pot3}.
\end{proposition}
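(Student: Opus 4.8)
The plan is to read off both implications from the bound \eqref{pot} established above for diverging conformal observers, after unpacking the two hypotheses. First, a \emph{subharmonic} solution of the GCKID equation \eqref{vec07} is one with $\Box\Psi\ge 0$, which by \eqref{poi1} means $\varphi\ge 0$; in the proper (non-Killing) regime relevant here this reads $\varphi>0$, so the conformal observers along $x^a=\nabla^a\Psi$ are diverging. Second, for a LRS II spacetime the vorticity \eqref{thereis4} of \emph{any} gradient vector field of the form \eqref{vec1} vanishes identically (since $\Omega=\xi=0$), so such an $x^a$ is automatically irrotational; consequently ``$\mathcal M$ is GCSt'' is equivalent to ``$x^a$ is timelike'', i.e., since $x^\mu x_\mu=\hat\Psi^2-\dot\Psi^2$, to the criterion $\dot\Psi^2/\hat\Psi^2>1$. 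Finally, using $\phi=2C_r/(BC)$ together with $\tfrac{2}{3}\Theta-\Sigma=2C_t/(AC)$ (the latter following from \eqref{thet1} and $\Sigma=\tfrac{2}{3A}(B_t/B-C_t/C)$) one gets $\phi/(\tfrac{2}{3}\Theta-\Sigma)=AC_r/(BC_t)$, so that $\bigl(\phi/(\tfrac{2}{3}\Theta-\Sigma)\bigr)^2>1$ is exactly \eqref{pot3}. Thus it suffices to show that the timelike criterion is equivalent to $\bigl(\phi/(\tfrac{2}{3}\Theta-\Sigma)\bigr)^2>1$.

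The ``only if'' direction is immediate: if $x^a$ is timelike then $1<\dot\Psi^2/\hat\Psi^2$, and since $\varphi>0$ the bound \eqref{pot} gives $\dot\Psi^2/\hat\Psi^2<\bigl(\phi/(\tfrac{2}{3}\Theta-\Sigma)\bigr)^2$, whence $\bigl(\phi/(\tfrac{2}{3}\Theta-\Sigma)\bigr)^2>1$, i.e., \eqref{pot3}.

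For the ``if'' direction, assume \eqref{pot3}. The key tool is \eqref{vec6}, $2\varphi=\alpha_1(\tfrac{2}{3}\Theta-\Sigma)+\alpha_2\phi$: substituting $\alpha_1=-\dot\Psi$, $\alpha_2=\hat\Psi$, isolating the $\phi$-term and squaring yields
\begin{equation*}
\hat\Psi^2\phi^2-\dot\Psi^2\Bigl(\tfrac{2}{3}\Theta-\Sigma\Bigr)^2=4\varphi\Bigl(\varphi+\dot\Psi\bigl(\tfrac{2}{3}\Theta-\Sigma\bigr)\Bigr).
\end{equation*}
Once the sign of the factor $\varphi+\dot\Psi(\tfrac{2}{3}\Theta-\Sigma)$ is fixed (as it must be for the derivation of \eqref{pot}), this identity determines the sign of $\bigl(\phi/(\tfrac{2}{3}\Theta-\Sigma)\bigr)^2-\dot\Psi^2/\hat\Psi^2$ and, fed back with \eqref{pot3}, is meant to force $\dot\Psi^2/\hat\Psi^2>1$. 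An alternative route is to invoke Theorem \ref{aap3}: a gradient CKV is pinned down by Cauchy data for \eqref{vec07} on a constant-$t$ slice, on which $\hat\Psi$ is a spatial derivative of the data while $\dot\Psi$ may be prescribed independently, so one can arrange $\dot\Psi^2/\hat\Psi^2>1$ there and then track it with the displayed identity. Either way the spacetime then carries a timelike irrotational gradient CKV and is GCSt.

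The main obstacle is precisely this converse. In Proposition \ref{pro1} the \emph{lower} bound \eqref{pot1}, combined with \eqref{pott}, at once forces the ratio above $1$; here \eqref{pot} is only a strict \emph{upper} bound, so \eqref{pot3} alone does not literally imply $\dot\Psi^2/\hat\Psi^2>1$, and one genuinely has to control the remainder $4\varphi(\varphi+\dot\Psi(\tfrac{2}{3}\Theta-\Sigma))$ in the squared form of \eqref{vec6} -- equivalently, the relative sign of $\dot\Psi$ and $\tfrac{2}{3}\Theta-\Sigma$ -- to rule out a spacelike or null gradient CKV coexisting with \eqref{pot3}. Everything else (the identification of ``subharmonic'' with $\varphi>0$, the automatic irrotationality of gradient CKVs in LRS II, and the passage to the metric-function form \eqref{pot3}) is routine.
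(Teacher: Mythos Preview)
Your argument tracks the paper's closely. The ``only if'' direction is exactly the paper's reasoning: assuming \eqref{pot}, the timelike criterion $\dot\Psi^2/\hat\Psi^2>1$ forces $\bigl(\phi/(\tfrac{2}{3}\Theta-\Sigma)\bigr)^2>1$, and your passage to \eqref{pot3} via $\phi=2C_r/(BC)$ and $\tfrac{2}{3}\Theta-\Sigma=2C_t/(AC)$ is the same computation the paper does.

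For the ``if'' direction you have correctly located a genuine difficulty---and it is one the paper does not resolve either. The paper's argument here is nothing more than the clause ``and hence, to satisfy the timelike criterion, it is both necessary and sufficient to have $\bigl(\phi/(\tfrac{2}{3}\Theta-\Sigma)\bigr)^2>1$'', asserted immediately after \eqref{pot}; no independent justification of sufficiency is offered. As you point out, \eqref{pot} is only an \emph{upper} bound on $\dot\Psi^2/\hat\Psi^2$, so the right-hand side exceeding $1$ does not by itself force the left-hand side to. Your squared form of \eqref{vec6} makes this precise: $\hat\Psi^2\phi^2-\dot\Psi^2(\tfrac{2}{3}\Theta-\Sigma)^2 = 4\varphi\bigl(\varphi+\dot\Psi(\tfrac{2}{3}\Theta-\Sigma)\bigr)$, and the sign of the second factor is not determined by $\varphi>0$ alone---indeed the same identity shows that even \eqref{pot} itself already presupposes that sign. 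Your Cauchy-data alternative does not close the gap either: arranging $\dot\Psi^2/\hat\Psi^2>1$ on a single slice neither guarantees that the evolved solution stays subharmonic nor that it coincides with the given subharmonic $\Psi$ in the hypothesis. In short, your proof reaches exactly as far as the paper's does; the sufficiency half is asserted rather than established in both.
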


Let us give a simple example. We do not provide a solution for the GCKID equation, but assuming a solution exists, we inspect the timelike criterion. Consider for example, a LTB-type metric

\begin{eqnarray}\label{ltb1}
ds^2=-dt^2+(R_rf)^2dr^2+R^2d\bar{\Omega}^2,
\end{eqnarray}
with \(d\bar{\Omega}^2\) denoting the 2-sphere metric, where \(f=f(r)>0\), and assume \(R=R(t,r)\) splits as \(R=rR_1(t)\), with \(-\dot{R}_1=R_{1t}<0\). (There are LTB solutions evolved from an initial constant \(t\)-slice, having this property and admitting a black hole, see for example \cite{ib3}.) Then, the conditions \eqref{pott} and \eqref{pot3} are the following respective bounds on the radial coordinate:

\begin{subequations}
\begin{align}
r&\leq \frac{1}{R_{1t}f},\label{pot5}\\
r&<-\frac{1}{R_{1t}f}.\label{pot6}
\end{align}
\end{subequations}
Clearly, \eqref{pot5} is not possible since \(r>0\), suggesting that if the metric \eqref{ltb1} admits a solution to the GCKID equation determining a timelike gradient CKV, the solution is necessarily superharmonic, with \(\varphi>0\). Furthermore, there are trapped surfaces behind the horizon \(r=2m(r)\), where \(m(r)\) is the Misner-Sharp mass, thereby imposing the upper bound \(m(r)<-1/(2R_{1t}f)\) on the mass in the timelike region.


\section{Discussion}\label{lrs4}


We have carried out a thorough analysis of the existence of gradient CKV in LRS spacetimes, and the characterization of LRS spacetimes as gradient conformally stationary (static). We considered CKVs that lie in the subspace spanned by the tangents to the timelike and spatial congruences in the spacetime, employing the 1+1+2 semitetrad covariant formalism, particularly adapted to analyzing LRS spacetimes.

The existence of gradient CKV of the form considered here was ruled out for the case where the LRS spacetime is simultaneously rotating and twisting. Hence, while it was demonstrated in \cite{singh1} that these spacetimes always admit CKV in the subspace under consideration, the gradient case is ruled out. It was consequently shown that for LRS spacetimes with at least one of the rotation or twist vanishing, only those twisted solutions with vanishing rotation can admit a timelike gradient CKV. The straightforward analysis of existence of GCKV for the case with at least one of the rotation or twist vanishing was made possible due to the particular relationship \(\Omega\dot{\psi}=\xi\hat{\psi}\) for scalars \(\psi\) in the spacetime. In the LRS II case, such relationship fails and hence not applicable. However, it is noticed that the constraint equation of the conformal Killing equation is the commutation relation for the potential function \(\Psi\) in the gradient case, and hence, the remaining three PDEs of the conformal Killing equations are sufficient. It turns out that the three PDEs can be combined into a single wave-like PDE for \(\Psi\), whose solutions (initial data can be specified on the constant \(t\) Cauchy surfaces in LRS II spacetimes to ensure a solution to the PDE) are in bijection with the gradient CKV on the spacetime, allowing us to draw an analogy to the well studied notions of Killing and conformal Killing initial data equations, and naming the equation a LRS gradient CKID (GCKID) equation. These results in some sense extends the work of \cite{singh1} by also considering existence in the LRS II case.

Given that the conformal Killing equations, using the usual `box' operator, takes the form \(\Box\Psi=4\varphi\), knowing the character of the conformal observers also allows us to immediately rule out certain functions on the spacetime as solutions to the GCKID equation, and introduce a characterization of the spacetimes as GCSt. Of course, harmonic functions as solutions are ruled out. The sign of the conformal factor, which tells us whether the conformal observers are converging or diverging, determines whether a solution to the PDE is a subharmonic or superharmonic function: the solution must be subharmonic for diverging conformal observers and superharmonic for converging conformal observers. For each class of conformal observers, the timelike criterion for the CKV is analyzed, and given by very specific conditions on the metric functions of the spacetime: for diverging conformal observers, the timelike criterion holds if and only if the metric functions of the spacetime obey the inequality $C_r^2>C_t^2(B^2/A^2)$. And in the case of converging conformal observers, a sufficient condition for the timelike criterion to hold is that the metric functions of the spacetime obey the inequality $C_r\geq C_t(B/A)$.

A simple demonstrable example shows that for an LTB-type metric, any `timelike' solution to the GCKID equation must be a superharmonic function on the spacetime, with an upper bound on the Misner-Sharp mass holding in the timelike region.

The factor $\varphi$ is an important ingredient in many physical processes in spacetimes, when a CKV exists. For example, the convergence of conformal observers, characterized by $\varphi<0$, is related to the trapping process which leads to the formation of black holes. Another relevance of the factor was highlighted in \cite{ojak}, where it plays a crucial role in a massless scalar field collapse. In this particular scenerio, in the vacinity of the singularity the spacetime is self-similar, i.e. $\varphi$ is constant (more precisely, its variation is negligible) in that region. 

As is seen, the factor $\varphi$ measures the deviation from harmonicity, of the potential function $\Psi$. It indeed holds true for a CKV generating a diffeomorphism, the factor $\varphi$ obeys the Klein-Gordon equation for a mass scalar field  (See Appendix B for the explicit derivation of the statement), i.e. $\Box\varphi=0$. This therefore suggest a generic relationship between the conformal factor and massless scalar field collapse in the presence of conformal symmetry, setting up the possibility for more general considerations of similar collapse studies in LRS spacetimes. (We note that the vanishing condition $\Box\varphi=0$ imposes $\Box^2\Psi=0$ on the potential function, where we have used $\Box^2$ as shorhand for twice applying the operator.)


\section*{Acknowledgements}


We are very grateful to the anonymous referee for pointing out relevant references and providing many helpful suggestions, which have consequently improved various aspects of the manuscript. The authors acknowledge that this research was supported by the Basic Science Research Program through the National Research Foundation of Korea (NRF) funded by the Ministry of education (grant numbers) (NRF-2022R1I1A1A01053784) and (NRF-2021R1A2C1005748). We would like to thank Rituparno Goswami of the University of KwaZulu-Natal for some helpful discussions.


\section*{Appendix A}\label{app}


One can write down an equivalent form of the Einstein field equations for a LRS spacetime as a set of evolution and propagation equations of the covariant variables specifying the spacetime, along the observers' and spatial congruences, along with a constraint equation specifying the magnetic Weyl scalar. These equations are obtained using appropriate contractions of the Ricci identities for the preferred unit directions. Extensive details can be found in the references \cite{cc1,cc2}, covering the general decomposition procedure.

\begin{itemize}

\item \textit{Evolution}

\begin{subequations}
\begin{align}
\frac{2}{3}\dot{\Theta}-\dot{\Sigma}&=\mathcal{A}\phi-\frac{1}{2}\left(\frac{2}{3}\Theta - \Sigma\right)^2+2\Omega^2-\frac{1}{3}\left(\rho+3p-2\Lambda\right)+\mathcal{E}-\frac{1}{2}\Pi,\label{evo1}\\
\dot{\phi}&=\left(\frac{2}{3}\Theta-\Sigma\right)\left(\mathcal{A}-\frac{1}{2}\phi\right)+2\xi\Omega+Q,\label{evo100}\\
\dot{\Omega}&=\mathcal{A}\xi-\left(\frac{2}{3}\Theta-\Sigma\right)\Omega,\label{aevo1}\\
\dot{\xi}&=-\frac{1}{2}\left(\frac{2}{3}\Theta-\Sigma\right)\xi+\frac{1}{2}\mathcal{H}+\left(\mathcal{A}-\frac{1}{2}\phi\right)\Omega,\label{aevo2}\\
\dot{\mathcal{E}}-\frac{1}{3}\dot{\rho}+\frac{1}{2}\dot{\Pi}&=-\left(\frac{2}{3}\Theta-\Sigma\right)\left(\frac{3}{2}\mathcal{E}+\frac{1}{4}\Pi\right)+\frac{1}{2}\phi Q+\frac{1}{2}\left(\rho+p\right)\left(\frac{2}{3}\Theta-\Sigma\right)+3\xi\mathcal{H},\label{evo101}\\
\dot{\mathcal{H}}&=-3\xi\mathcal{E}-\frac{3}{2}\left(\frac{2}{3}\Theta-\Sigma\right)\mathcal{H}+\Omega Q+\frac{3}{2}\xi\Pi.\label{aevo3}
\end{align}
\end{subequations}

\item \textit{Propagation}

\begin{subequations}
\begin{align}
\frac{2}{3}\hat{\Theta}-\hat{\Sigma}&=\frac{3}{2}\phi \Sigma+2\xi\Omega +Q,\label{evo2}\\
\hat{\phi}&=\left(\frac{1}{3}\Theta+\Sigma\right) \left(\frac{2}{3}\Theta-\Sigma\right)-\frac{1}{2}\phi^2+2\xi^2-\frac{2}{3}\left(\rho+\Lambda\right)-\mathcal{E}-\frac{1}{2}\Pi,\label{evo200}\\
\hat{\Omega}&=\left(\mathcal{A}-\phi\right)\Omega,\label{aevo4}\\
\hat{\xi}&=-\phi\xi-\left(\frac{1}{3}\Theta+\Sigma\right)\Omega,\label{aevo5}\\
\hat{\mathcal{E}}-\frac{1}{3}\hat{\rho}+\frac{1}{2}\hat{\Pi}&=-\frac{3}{2}\phi\left(\mathcal{E}+\frac{1}{2}\Pi\right)+3\Omega\mathcal{H}-\frac{1}{2}\left(\frac{2}{3}\Theta-\Sigma\right)Q,\label{evo201}\\
\hat{\mathcal{H}}&=-\left(3\mathcal{E}+\rho+p-\frac{1}{2}\Pi\right)\Omega-\frac{3}{2}\phi\mathcal{H}-Q\xi.\label{aevo6}
\end{align}
\end{subequations}

\item \textit{Propagation/Evolution}

\begin{subequations}
\begin{align}
\hat{\mathcal{A}}-\dot{\Theta}&=-\left(\mathcal{A}+\phi\right)\mathcal{A}+\frac{1}{3}\Theta^2+\frac{3}{2}\Sigma^2+\frac{1}{2}\left(\rho+3p-2\Lambda\right),\label{evo3}\\
\hat{Q}+\dot{\rho}&=-\Theta\left(\rho+p\right)-\left(\phi+2\mathcal{A}\right)Q-\frac{3}{2}\Sigma\Pi,\label{evo300}\\
\hat{p}+\hat{\Pi}+\dot{Q}&=-\left(\frac{3}{2}\phi+\mathcal{A}\right)\Pi-\left(\frac{4}{3}\Theta+\Sigma\right)Q-\left(\rho+p\right)\mathcal{A},\label{evo301}
\end{align}
\end{subequations}
with \(\Lambda\) being the cosmological constant.
\end{itemize}

We also have the following constraint:

\begin{eqnarray}\label{aevo10}
\mathcal{H}=3\Sigma\xi-\left(2\mathcal{A}-\phi\right)\Omega.
\end{eqnarray}

\section*{Appendix B}\label{app}

We establish that the conformal factor $\varphi$ associated to a CKV, where the CKV generates a diffeomorphism, obeys the wave equation $\Box\varphi=0$.

For any generator $V^{\mu}$ of diffeomorphism,

\begin{eqnarray}
J_{\mu}=\nabla^{\nu}\nabla_{\mu}V_{\nu}-\nabla^{\nu}\nabla_{\nu}V_{\mu},
\end{eqnarray}
defines a conserved quantity: $\nabla^{\mu}J_{\mu}=0$, as can be easily verified. This is the Komar current \cite{komar1}. 

Now, $J_{\mu}$ is just the right hand side of the contracted Ricci identities:

\begin{eqnarray}
J_{\mu}=R_{\mu\nu}V^{\nu},
\end{eqnarray}
Thus, if the field $V^{\mu}$ is a CKV, then taking the divergence of the above, making use of the contracted Bianchi identities as well as noting the symmetricity of the Ricci tensor, we have that

\begin{eqnarray}\label{box}
V^{\mu}\nabla_{\mu}R+2\varphi R=0.
\end{eqnarray}

Now, from the Ricci identities for $V^{\mu}$ and using the definition of $\varphi$, it can be established that 

\begin{eqnarray}
\Box\varphi=-\frac{1}{6}V^{\mu}\nabla_{\mu}R-\frac{1}{3}\varphi R.
\end{eqnarray}
Then, it follows that simply comparing the above to \eqref{box} gives the desired $\Box\varphi$=0.


\end{document}